\newcommand{\id}{\mathds{1}}
\renewcommand{\L}{\mathcal{L}}
\newcommand{\doublewidetilde}[1]{{%
  \mathpalette\double@widetilde{#1}%
}}
\newcommand{\double@widetilde}[2]{%
  \sbox\z@{$\m@th#1\widetilde{#2}$}%
  \ht\z@=.9\ht\z@
  \widetilde{\box\z@}%
}
\newtheorem{definition}{Definition}
\newtheorem{theorem}{Theorem}
\newtheorem{lemma}{Lemma}
\newtheorem{corollary}{Corollary}
\definecolor{brow}{rgb}{0.8515625,0.67578125,0.5234375} 
\definecolor{blu}{rgb}{0.39453125,0.6171875,0.734375} 
\begin{document}
\setlength\fboxrule{1pt}
\title{Certifying measurement incompatibility in prepare-and-measure and Bell scenarios}

\author{Sophie Egelhaaf}
\affiliation{Department of Applied Physics, University of Geneva, 1211 Geneva, Switzerland}
\author{Jef Pauwels}
\affiliation{Department of Applied Physics, University of Geneva, 1211 Geneva, Switzerland}
\affiliation{Constructor University, Geneva, Switzerland}
\author{Marco Túlio Quintino}
\affiliation{{Sorbonne Universit\' {e}, CNRS, LIP6, F-75005 Paris, France}}
%\orcid{0000-0003-1332-3477}
%\email{Marco.Quintino@lip6.fr}
\author{Roope Uola}
\affiliation{Department of Applied Physics, University of Geneva, 1211 Geneva, Switzerland}
%\date{\today}

\begin{abstract}
We consider the problem of certifying measurement incompatibility in a prepare-and-measure (PM) scenario. We present different families of sets of qubit measurements which are incompatible, but cannot lead to any quantum over classical advantage in PM scenarios. Our examples are obtained via a general theorem which proves a set of qubit dichotomic measurements can have their incompatibility certified in a PM scenario if and only if their incompatibility can be certified in a bipartite Bell scenario where the parties share a maximally entangled state. Our framework naturally suggests a hierarchy of increasingly stronger notions of incompatibility, in which more power is given to the classical simulation by increasing its dimensionality. For qubits, we give an example of measurements whose incompatibility can be certified against trit simulations, which we show is the strongest possible notion for qubits in this framework. 

\end{abstract}
\maketitle
\section{Introduction}
Measurement incompatibility, the existence of measurements that cannot be carried out simultaneously on a single system, is a defining feature of quantum theory. It represents a radical departure from classical theory and underpins central aspects in quantum foundations, such as the resolution of the Heisenberg microscope \cite{buschrmp2014}, fundamental bounds in interferometry \cite{Kiukas22coherence}, and Fine's theorem \cite{fine82}. More recently, measurement incompatibility has found various uses in quantum information theory \cite{Guhne_JMreview} as a key ingredient for quantum information protocols, such as quantum key distribution~\cite{masini2024joint,roydeloison2024deviceindependent}, channel verification \cite{Pusey2015}, device-independent entanglement certification \cite{Lobo2023}, EPR steering \cite{wiseman07,dani_paul_review,Uola_Steerreview,quintino14,uola14,uola15,kiukas17}, Bell nonlocality \cite{brunner_review,wolf09}, temporal correlations \cite{Clemente2015,uola19a,uola2022retrievability}, Random Access Coding \cite{Carmeli2020}, state discrimination tasks \cite{skrzypczyk19,carmeli19a,oszmaniec19,uola19b,uola19c}, distributed sampling \cite{guerini2019}, contextuality \cite{tavakoli20,selby21}, and programmability \cite{buscemi20}.

Incompatibility is defined at the abstract level of Hilbert spaces, as a relational property between operator measures. In practice, however, neither these abstract objects nor the relations between them, are directly observable. We only have access to the outcome statistics. While incompatibility can be formulated operationally, its direct detection hinges on the assumption that appropriately chosen sets of trusted test-states can be prepared.

Another feature of quantum theory is the ability to establish correlations between distant parties that defy classical explanations either by measuring quantum messages, as in the prepare-and-measure scenario \cite{ambainis1998_RAC,ambainis2008_RAC,galleo2010gallego,DallArno2022}, or by performing local measurements on pre-shared entangled states, as in the Bell scenario \cite{bell64,brunner_review}. In the Bell scenario, reaching such non-classical correlations is known to require incompatible measurements and, hence, can be used to certify incompatibility in a device-independent way, a fact studied extensively \cite{Chen2016,Cavalcanti2016,Chen2021,Quintino2019}.

Here, we analyse the problem of certifying measurement incompatibility in a PM scenario. This problem has appeared in some previous works, either implicitly or explicitly~\cite{Frenkel2015_Weiner,viera2022_interplays,degois21_PMJM,tavakoli2018_selt_test,Saha23}. Here, we provide a rigorous definition for the phenomenon, consistent with the previous works, and discuss some important properties. Focusing in particular on the simplest case of qubit measurements, we show that various fundamental results in the literature \cite{divianszky23_PM_Bell_Grothendieck,renner23_PM,Frenkel2015_Weiner,degois21_PMJM} can be linked to our scenario, providing a better structural understanding of the concept of PM certifiable incompatibility, together with classical models and witnessing techniques for the phenomenon. 

The paper is structured as follows. We present basic definitions and discuss their operational role in the second section. In the third section, we summarize the central correlation scenarios: the prepare-and-measure and the Bell scenario. In the fourth section, we illustrate incompatibility in the PM scenario against classical simulation protocols. More precisely, we show that all sets of qubit measurements have a four-dimensional classical model and show that the sets of qubit measurements which have a two-dimensional classical model also have a local model when used in a Bell scenario with a maximally entangled state. We further provide examples of measurement sets illustrating this. Our notion of incompatibility naturally suggests a hierarchy of increasingly stronger notions of incompatibility corresponding to the dimensionality of the underlying classical simulation. For qubits, we show that all measurements can be simulated with quart models, and give an example of incompatible measurements that can be certified incompatible assuming trit models. We conclude the paper in the fifth section with discussion and open questions.

\section{Joint measurability}

\begin{figure}[ht!]
\centering
\includegraphics[width= 0.9\linewidth, clip]{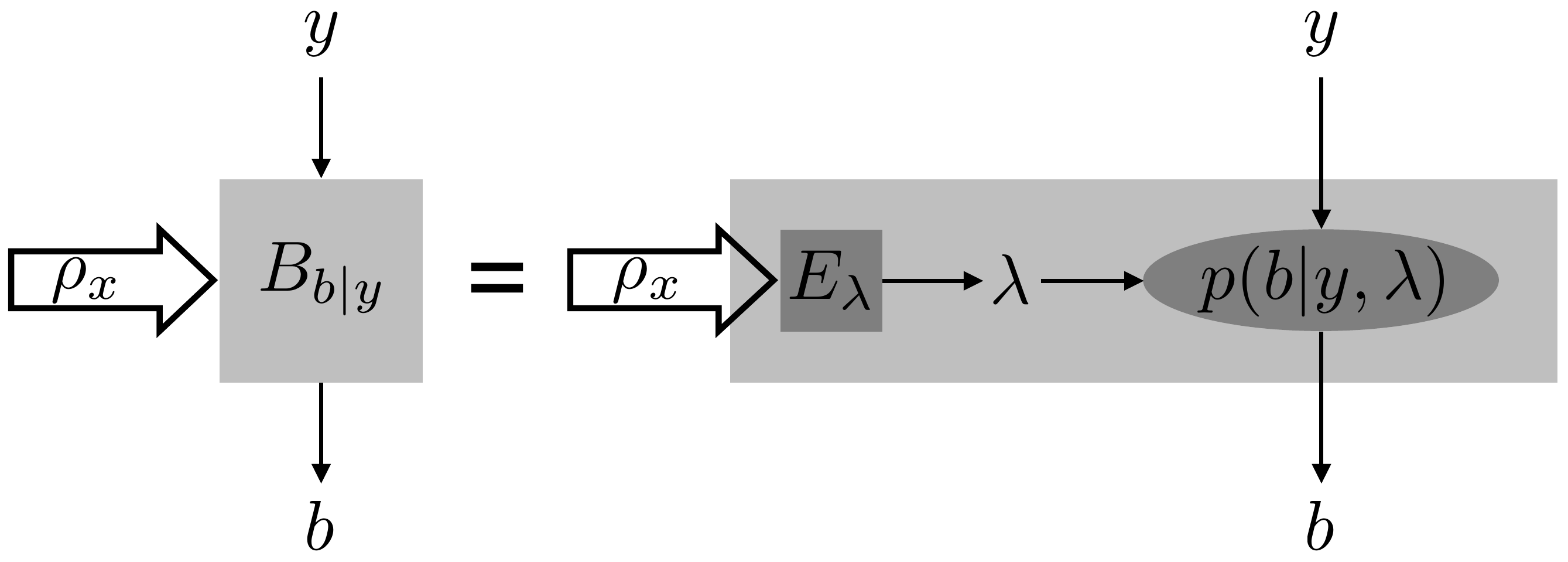}
\caption{Operational definition of joint measurability. Joint measurability asks whether there exists a single POVM $\{E_\lambda\}_\lambda$ from which the original set of POVMs $\{B_{b|y}\}_{b,y}$ can be classically post-processed.} \label{fig:JM}
\end{figure}

A quantum measurement is modelled as a positive operator-valued measure (POVM), i.e. a collection $\{B_{b|y}\}_{b}\subset\L(\mathbb{C}_d)$ for which $B_{b|y}\geq0$ and $\sum_b B_{b|y}=\id$. Here the index $y$ labels the choice of measurement and $b$ is the corresponding outcome. A central concept for sets of measurements $\{B_{b|y}\}_{b,y}$ is \emph{joint measurability}. This asks whether there exists some further POVM from which the set can be post-processed, cf. Fig.~\ref{fig:JM}. More formally, we have the following definition:

\begin{definition}[Joint measurability (JM)]
A set of quantum measurements $\{B_{b|y}\}_{y}$ is jointly measurable (JM) if there exists a POVM $\{E_\lambda\}_\lambda$ and probability distributions $\{p(\cdot|y,\lambda)\}_{y\lambda}$ such that
\begin{align}
	B_{b|y}=\sum_\lambda  p(b|y,\lambda) E_\lambda, \quad \quad \forall b,y
\end{align}
A set of measurements which is not jointly measurable is called \emph{incompatible}. The POVM $\{E_\lambda\}_\lambda$ is called a joint or mother POVM.
\end{definition}

This definition of joint measurability, formulated at the level of relations between operators in Hilbert space, is entirely formal. Nonetheless, in the context of quantum theory, our observations are limited to the \emph{outcome statistics of measurements}. To make the definition more operational practical, consider a bipartite correlation experiment involving two parties, Alice and Bob. Bob possesses a device capable of implementing the various measurements $\{B_{b|y}\}_{y}$ based on some independently chosen input $y$. In each round of the experiment, Alice sends a trusted quantum state $\rho_x$ based on some input $x$. Such prepare-and-measure experiments are characterized by a set of conditional probabilities $\{p(b|x,y)\}_{bxy}$ referred to as the correlations in the experiment. Quantum theory predicts that given these states and measurements, $p(b|x,y) = \tr[\rho_x B_{b|y}]$.

If the preparation device of Alice is completely trusted, we can give the practical, but equivalent definition of joint measurability: 
\newline

\textbf{Definition 1$^*$} (Joint measurability (JM), practical)
\emph{A set of quantum measurements $\{B_{b|y}\}_{b,y}$ is jointly measurable (JM) if for \emph{every} set of trusted preparations $\{\rho_x\}_x$ we can write:
\begin{equation}
p(b|\rho_x,y) = \sum_a p_A(a|\rho_x) p_B(b|y,a),
\end{equation}
where $p_B(b|y,a)$ are classical post-processings and one requires $p_A(a|\rho_x)$ to be linear in the second argument. From Fréchet-Riesz representation theorem, it then follows that $p_A(a|\rho_x) = \tr(\rho_x E_a)$ for some POVM $\{E_a\}_a$.}
\newline

This equivalent definition is advantageous as it directly connects to what we can observe, namely outcomes of experiments. Specifically, it asserts that the joint measurability of the set of quantum measurements is confirmed if, for any conceivable set of trusted preparations $\{\rho_x\}_x$, the observed probabilities $p(b|\rho_x,y)$ can be expressed as a combination of the probabilities $p_A(a|\rho_x)$ related to a mother POVM $\{E_a\}_a$ and classical post-processing $p_B(b|y,a)$. However, it is important to note that this definition assumes perfect trust in Alice's device, implying that the device precisely prepares the states $\{\rho_x\}_x$.

Nonetheless, it is easy to see that the requirement of perfectly trusted preparations is not needed to check the incompatibility of various incompatible measurements. For example, consider that the preparations of Alice are noisy, i.e. instead of sending the states $\rho_x$, she sends the states $\eta \rho_x + (1-\eta) \frac{\id}{d}$. This is equivalent to assuming that the measurements performed by Bob are noisy. For any set of $d$-dimensional measurements $\{B_{b|y}\}_{y}$ and $\eta\in[0,1]$, we define the set of white noisy measurements $\{B_{b|y}^\eta\}_{y}$ as
\begin{align}
	B_{b|y}^\eta:= \eta B_{b|y} + (1-\eta) \frac{\id}{d} \tr(B_{b|y}), \; \forall b,y.
\end{align}

For any set of incompatible measurements $\{B_{b|y}^\eta\}_{y}$, the noisy set $\{B_{b|y}^\eta\}_{y}$ is still incompatible up to some level of noise $\eta$. This can be seen by noting that the set of jointly measurable sets of measurements is closed \cite{Reeb2013}. As a standard example, one can consider the pair of noisy spin measurements in the directions $x$ and $z$. The spin measurements are given by $B_{+|1}=|+\rangle\langle +|, B_{-|1}=|-\rangle\langle -|$, $B_{+|2}=|0\rangle\langle 0|$ and $B_{-|2}=|1\rangle\langle 1|$. The resulting noisy measurements $B_{b|y}^\eta$ are jointly measurable for $\eta\leq1/\sqrt{2}$ with the mother POVM $E_{i,j}=\frac{1}{4}(\openone + \eta(i\sigma_x+j\sigma_z))$, where $i,j\in\{-1,1\}$, and otherwise incompatible \cite{Busch1986}. The post-processing is simply marginalisation, i.e. summing over the index $j$ gives the noisy $x$ measurement and summing over $i$ gives the noisy $z$ measurement. The critical value of $\eta$ at which a set of measurements $\{B_{b|y}^\eta\}_{y}$ becomes jointly measurable serves as a simple metric for noise robustness.

This raises the question of whether we can get a stronger certificate of joint measurability, by lifting the level of trust on Alice's preparations.

\section{Device-independent certification of joint measurability}

\begin{figure*}
\centering
\includegraphics[width= 0.9\linewidth, clip]{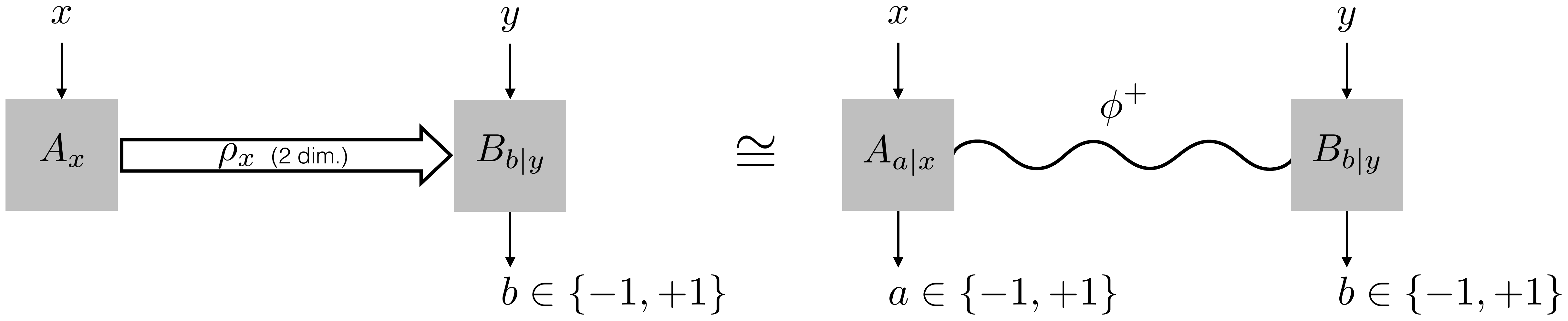}
\caption{Left: The prepare-and-measure scenario for a qubit message and a dichotomic measurement. Right: The Bell scenario where Alice and Bob share a $\phi^+$ state and perform dichotomic measurements. The set of measurements that violate the 2-dimensional classical model and the local hidden variable model, respectively, are equivalent.} \label{fig:PM_Bell}
\end{figure*}

If Alice's preparations $\rho_x$ are completely untrusted, can the parties certify the incompatibility of a set of measurements $\{B_{b|y}\}_y$? The usual approach to analysing the prepare-and-measure scenario (semi-)device-independently, assumes some limitation on the kind of states Alice can send, or, equivalently, the systems supported by the channel connecting the parties \cite{gallego14}. Indeed, if the message $\rho_x$ can encode perfect information about the input $x$, the parties can reproduce any correlations $p$. To have a non-trivial scenario, we can limit the information that $\rho_x$ can contain about $x$. A natural and common approach is to limit the dimension $d$ of the Hilbert space of $\rho_x$, and to allow for shared randomness between the sender and the receiver. Note that for our purposes, this dimension is naturally equal to the dimension of the space that the measurements $\{B_{b|y}\}_y$ act on.

By definition, to certify that a set of measurements $\{B_{b|y}\}_y$ is incompatible entails proving that the correlations cannot be created by performing a single mother POVM plus classical post-processing. Since all correlations in a prepare-and-measure scenario with a single measurement setting can be simulated with classical systems of dimension equaling that of the measurement \cite{Frenkel2015_Weiner}, showing incompatibility with untrusted preparations is thus equivalent to demonstrating that, for \emph{some set of states}, the resulting correlations $p$ cannot be classically simulated. We refer to the correlations in a prepare-and-measure experiment $p$ as \emph{classical} or $\text{PM}_d$ if $p$ can be explained by Alice communicating $d$-dimensional classical systems:

\begin{definition}[$\text{PM}_d$]
A set of probabilities $\{p(b|x,y)\}_{b,x,y}$ can be obtained in a prepare-and-measure scenario with $d$-dimensional classical communication if there exists probability measures $\mu$, $p_A(\cdot|x,\lambda)$, $p_B(\cdot|a,y,\lambda)$, where $a\in{1,\ldots, d}$ such that
\begin{align}
p(b|x,y)= \int \pi(\lambda) \sum_{a=1}^d p_A(a|x,\lambda) p_B(b|a,y,\lambda)d\mu(\lambda)
\end{align}
for all $b,x,y$.
\end{definition}
The smallest $d$ for which given PM correlations do have PM$_d$ model is sometimes called the signalling dimension \cite{DellArno2017}. This leads to the following definition:

\begin{definition}[$\text{PM}_d$-JM]
A set of measurements $\{B_{b|y}\}_{y}$, $B_{b|y}\in L(\mathbb{C}_d)$ is $\text{PM}_d$-JM if, for every set of quantum states $\{\rho_x\}_x$, where $\rho_x\in L(\mathbb{C}_d)$, the set with probabilities given by
\begin{align}
p(b|x,y)=\tr(\rho_x B_{b|y})
\end{align}
is $\text{PM}_d$.
\end{definition}
Note that in the above definitions, we can allow the dimension $d_C$ of the classical simulation to be different from the dimension $d_Q$ of the measurements. This enables us to formulate stronger statements about the incompatibility of a set of measurements when $d_C > d_Q$. 

In Ref.~\cite{degois21_PMJM}, the authors employ a computational polytope approximation method to demonstrate the existence of qubit measurements that are not jointly measurable (JM) but are $\text{PM}_2$-JM. In the next section, we will analytically prove this result by presenting several explicit examples of qubit measurements that are incompatible (not JM) but $\text{PM}_2$-JM. 

Another natural correlation experiment which can be used to certify the incompatiblity of quantum measurements is the Bell scenario. Here, Alice and Bob, who both receive an input, respectively $x$ and $y$, produce outcomes $a$ and $b$, see Fig.~\ref{fig:PM_Bell}. The parties can exchange physical systems before the experiment, but are not allowed to communicate after receiving their inputs. A Bell experiment is described by the conditional probabilities $\{p(a,b|x,y)\}_{a,b,x,y}$, referred to as correlations. 
The most general correlations allowed by quantum theory, are obtained by Alice and Bob performing quantum measurements, respectively $\{A_{a|x}\}_x$ and $\{B_{b|y}\}_y$ on a shared quantum state $\rho_{AB}$. The correlations are given by the Born rule $p(a,b|x,y) = \tr [\rho_{AB} A_{a|x}\otimes B_{b|y}]$.

If the correlations $p$ can be described by Alice and Bob exchanging classical systems before the experiment, we call $p$ \emph{Bell local}. More precisely, we have the following definition:

\begin{definition}[Bell locality]
A set of probabilities $\{p(a,b|x,y)\}_{a,b,x,y}$ is Bell local if there exists sets of probability measures $\mu$, $p_A(\cdot|x,\lambda)$, and $p(\cdot|y,\lambda)$ such that
\begin{align}
	p(a,b|x,y)=\int \pi(\lambda) p_A(a|x,\lambda) p_B(b|y,\lambda) d\mu(\lambda)\quad
\end{align}
for all $a,b,x,y$.

\end{definition}

This naturally leads to the following definition:

\begin{definition}[Bell-JM and Bell$^{\phi^+}$-JM]
A set of measurements $\{B_{b|y}\}_{y}$, $B_{b|y}\in L(\mathbb{C}_d)$ is Bell-JM if for every quantum state $\rho_{AB}\in L(\mathbb{C}_{d'}\otimes \mathbb{C}_d)$ and every set of measurements $\{A_{a|x}\}_{a,x}$ the set with probabilities given by
\begin{align}
	p(a,b|x,y)=\tr(\rho_{AB} A_{a|x}\otimes B_{b|y})
\end{align}
is Bell local. 

The set of measurements $B_{b|y}\in L(\mathbb{C}_d)$ is Bell$^{\phi^+}$-JM if 
\begin{align}
	p(a,b|x,y)=\tr(\ketbra{\phi^+_d}{\phi^+_d} A_{a|x}\otimes B_{b|y})
\end{align}
is Bell local, where $\ket{\phi^+_d}:=\frac{1}{\sqrt{d}}\sum_{i=0}^{d-1}\ket{ii}$ is the maximally entangled state.
\end{definition}

The relationship of between JM and Bell-JM has been analysed in 
Refs.~\cite{quintino14,uola14},  which present examples of measurements which are not JM but are Bell$^{\phi^+}$-JM in a scenario where Alice's measurements are dichotomic.
Later, Ref.~\cite{quintino15b} shows an example of measurements which are not JM but are Bell-JM in a scenario where Alice's measurements are dichotomic. Finally, Ref.~\cite{hirsch19_JMBELL} and \cite{bene18_BELLJM} showed that there exists sets of qubit measurements which are not JM, but they are Bell-JM, regardless of what Alice measures.

\section{Main results}

In this section, we establish and demonstrate results pertaining to the certification of joint measurability in both prepare-and-measure and Bell correlations, with a specific focus on qubit scenarios. To commence, we have the following theorem~\cite{renner23_PM}:

\begin{theorem} [PM$_4$-JM for qubits]
Any set of qubit measurements $\{B_{b|y}\}_{y}$ is PM$_4$-JM.
\end{theorem}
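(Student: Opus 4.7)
The plan is to construct, for any qubit states $\{\rho_x\}$ and qubit POVMs $\{B_{b|y}\}$, an explicit classical prepare-and-measure protocol with $4$ message symbols that reproduces $p(b|x,y)=\tr(\rho_x B_{b|y})$. The conceptual anchor is that the real vector space of Hermitian operators on $\mathbb{C}^2$ is $4$-dimensional, so both $\rho_x$ and each $B_{b|y}$ admit a $4$-parameter Bloch-type description, and the correlation tensor $p(b|x,y)$ factorises bilinearly through this $4$-dimensional intermediate space. The job is to upgrade this algebraic low-rank factorisation into a genuinely probabilistic classical simulation with non-negative stochastic data.

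First, using linearity of the Born rule in the state, I would reduce to pure-state preparations $\rho_x=\ketbra{\psi_x}{\psi_x}$, since any mixed preparation is a convex combination of pure states and the mixture can be absorbed into shared randomness at Alice's side. I would then take as a universal classical alphabet a tetrahedral SIC--POVM $\{T_i=\tfrac{1}{2}\ketbra{t_i}{t_i}\}_{i=1}^4$ on the Bloch sphere: Alice's encoding samples $a\in\{1,\dots,4\}$ with probability $q_a(x)=\tr(\rho_x T_a)$ and transmits $a$, while Bob, given $a$ and $y$, outputs $b$ with a conditional probability $p_B(b|a,y)$ to be engineered so that
\begin{equation}
\sum_{a=1}^4 q_a(x)\,p_B(b|a,y)=\tr(\rho_x B_{b|y}) \qquad \forall\,b,x,y.
\end{equation}
Since the SIC--POVM is informationally complete, the linear constraint above uniquely determines an effective $p_B$ by inverting the SIC frame on $B_{b|y}$.

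The main obstacle — and where the proof does its real work — is positivity: the inverse-frame coefficients produced this way generally fall outside $[0,1]$, because qubit POVM elements are not in general non-negative combinations of SIC elements. I would address this by enlarging the shared randomness $\lambda$ to include a random rotation of the entire tetrahedral frame (either continuous over $SO(3)$ or a finite symmetric covering), splitting off an "identity" component of the coefficient vector that can be absorbed into a uniform noise term, and then using convexity to express the Bob-side distribution as a genuine convex combination of valid conditional stochastic maps on $\{1,\dots,4\}$. Rotational invariance ensures the averaged protocol still reproduces $\tr(\rho_x B_{b|y})$, and the surplus negative weight is neutralised by the uniform component.

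The four-outcome structure is used essentially in this last step: three classical messages could only span a $3$-dimensional section of the $4$-dimensional Hermitian qubit operator space, so no comparable $\text{PM}_3$ protocol exists for arbitrary qubit PM correlations. This makes $d_C=4$ tight in the theorem, matching the dimension count for $L(\mathbb{C}^2)$ and consistent with the signalling-dimension perspective discussed earlier in the excerpt.
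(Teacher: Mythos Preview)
Your algebraic setup via the SIC--POVM is sound and correctly produces the \emph{quasi}-protocol $p_B(b|a,y)=3\tr(\Pi_a B_{b|y})-\tr(B_{b|y})$, which sums to $1$ in $b$ and reproduces $\tr(\rho_x B_{b|y})$ exactly. But the positivity step, which you flag as ``where the proof does its real work,'' is not actually carried out. Averaging over rotations of the tetrahedron does not help: for each fixed rotation $U$ the pair $(q_a^U,p_B^U)$ still has negative entries, and a mixture over $U$ of invalid stochastic maps is not a valid stochastic map. ``Splitting off an identity component'' and ``absorbing it into uniform noise'' would add an offset to $p_B$, but any such offset that restores positivity also changes the reproduced correlations. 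What you have is a rank-$4$ affine factorisation, whereas $\text{PM}_4$ demands a \emph{non-negative} rank-$4$ factorisation (with shared randomness), and bridging the two is precisely the nontrivial content of the result. The paper does not attempt this construction; it simply invokes the theorem of Renner and collaborators that every qubit PM behaviour admits a two-bit classical simulation, whose proof uses a continuous shared random direction on the sphere together with a carefully designed message rule, rather than a fixed (or rotated) SIC.

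Your closing tightness argument is also incorrect. With unlimited shared randomness, the message alphabet need not ``span'' anything: a dimension count on $L(\mathbb{C}^2)$ does not rule out $\text{PM}_3$ simulations. The paper establishes that $d_C=4$ is tight (its Theorem~2) not by dimension counting but by exhibiting an explicit qubit PM behaviour that provably lies outside $\text{PM}_3$.
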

\begin{proof}
	Ref.~\cite{renner23_PM} shows that all qubit quantum PM strategies can be perfectly simulated classically with two bits of communication. In other words, for qubits, all sets with quantum probabilities $p(b|x,y)=\tr(\rho_x B_{b|y})$ are PM$_4$.
\end{proof}
Consequently, the following theorem represents the strongest form of incompatibility in prepare-and-measure qubit systems concerning classical simulability:
\begin{theorem} [PM$_3$-JM for qubits]
There exists sets of qubit measurements $\{B_{b|y}\}_{y}$ which are not PM$_3$-JM.
\end{theorem}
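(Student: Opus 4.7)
My plan is to exhibit an explicit set of incompatible qubit measurements, paired with a specific choice of qubit preparations, whose quantum correlations cannot be reproduced by any PM$_3$ strategy. A natural starting point is the three Pauli projective measurements $\{B_{\pm|y}\}$ with $y\in\{1,2,3\}$ corresponding to $\sigma_x,\sigma_y,\sigma_z$; these are pairwise mutually unbiased and form a highly incompatible triple. For preparations, a symmetric candidate is the set of eight Bloch-sphere states aligned with the cube vertices, $\rho_{\vec s}=\tfrac12\bigl(\id+\tfrac{1}{\sqrt{3}}\vec s\cdot\vec\sigma\bigr)$ with $\vec s\in\{\pm 1\}^3$, which probes every Pauli direction symmetrically.

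Next, I would build a linear dimension witness
\begin{equation}
W(p)=\sum_{b,x,y}c_{bxy}\,p(b|x,y),
\end{equation}
optimising the coefficients $c_{bxy}$ to separate the quantum correlations from the PM$_3$ polytope. Since PM$_3$-correlations are convex combinations of deterministic trit strategies — Alice's encoding $x\mapsto a\in\{1,2,3\}$ combined with Bob's response function $(a,y)\mapsto b$ — the PM$_3$ bound on $W$ is a finite maximisation over these vertices. With $8$ preparations and $3$ dichotomic measurements this amounts to roughly $3^{8}\cdot 2^{9}\approx 3\cdot 10^{6}$ extreme strategies, which is fully tractable numerically, and for a sufficiently symmetric witness the bound should admit a closed-form argument built on the pigeonhole observation that a trit necessarily groups at least $\lceil 8/3\rceil=3$ preparations under the same symbol.

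The final step is to compute the quantum value of $W$ in closed form via $\tr(\rho_{\vec s}B_{\pm|y})=\tfrac12(1\pm s_y/\sqrt{3})$ and to verify that it strictly exceeds the PM$_3$ bound obtained above. Since PM$_3$-JM requires simulability for \emph{every} choice of preparations, exhibiting one family of preparations that breaks PM$_3$-simulability immediately certifies that the three Pauli measurements are not PM$_3$-JM, which proves the theorem.

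The main obstacle is balancing the two steps: a very symmetric witness can cancel the quantum--classical gap, while a generic witness is difficult to bound by hand over PM$_3$. In practice, the cleanest resolution is the hybrid approach of Ref.~\cite{degois21_PMJM}, in which one runs a linear program over deterministic trit strategies to find a separating hyperplane numerically and then re-extracts a symmetric, human-readable witness from the dual solution; the role of the Pauli triple and the cube preparations is to keep this computation small and the analytic bound tractable.
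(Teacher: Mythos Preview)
Your proposal is a plan rather than a proof: you never actually construct the separating witness, compute its PM$_3$ bound, or verify that the quantum value exceeds it. More importantly, the specific example you pick almost certainly does \emph{not} work. Take the symmetric witness you naturally gravitate towards, $W=\sum_{x,y}s_y(x)\,E_{xy}$ with the eight cube states and the three Pauli observables. The quantum value is
\[
W_Q=\sum_{x,y}s_y(x)\,\frac{s_y(x)}{\sqrt{3}}=\frac{24}{\sqrt{3}}=8\sqrt{3}\approx 13.86.
\]
But the trit strategy that groups the preparations as $G_1=\{(+++),(++-),(+-+),(-++)\}$, $G_2=\{(+--)\}$, $G_3=\{(-+-),(--+),(---)\}$ with Bob's response vectors $(+,+,+)$, $(+,-,-)$, $(-,-,-)$ attains $W=6+3+5=14>8\sqrt{3}$. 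So the symmetric witness you would build first does \emph{not} separate; the ``obstacle'' you flag at the end is fatal, not merely inconvenient. Three orthogonal measurement directions leave too little room between the quantum value and the trit polytope, and there is no indication that any other witness on this small scenario succeeds --- this is precisely why the question of trit-simulability of qubits was nontrivial.

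The paper takes a completely different route: it simply invokes the explicit construction of Ref.~\cite{renner23_PM}, which exhibits a qubit PM behaviour with six preparations (the Pauli eigenstates) and \emph{twenty-four} projective measurements (Bloch vectors at the vertices of a snub cube) that provably lies outside PM$_3$. The large number of non-coplanar measurement directions is what forces the trit simulation to fail; your three-Pauli proposal lacks this richness. If you want a self-contained argument, you would need to reproduce or adapt that 24-measurement example and its accompanying witness, not the Pauli triple.
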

\begin{proof}
	Ref.~\cite{renner23_PM} presents an explicit example of a qubit strategy in a PM scenario which cannot be obtained via trit classical strategies. Hence, the qubit measurements used in this work are not PM$_3$-JM. In the example of Ref.~\cite{renner23_PM}, Alice's states are the 6 eigenvectors of Pauli matrices and Bob has 24 different projective measurements, which are chosen such that their representation on the Bloch sphere is the vertices of a snub cube.
\end{proof}

Our subsequent theorem strongly builds upon the results and methodologies outlined in Ref.~\cite{divianszky23_PM_Bell_Grothendieck}, which establishes a one-to-one connection between specific full-correlation dichotomic Bell scenarios and dichotomic prepare-and-measure scenarios.
\begin{theorem}\label{thm:main} [PM$_2$-JM and Bell$^{\phi^+}$-JM for qubits] Let $\{B_{b|y}^{\eta_y}\}_{y}$ be a set of qubit noisy projective measurements, that is, $B_{b|y} \in L(\mathbb{C}_2)$ and we can write $B_{0|y}^{\eta_y}=\eta_y\ketbra{\phi_y}+(1-\eta_y)\frac{\id}{2}$, for some qubit pure state $\ket{\phi_y}$ and $\eta_y\in[0,1]$.
	The set of noisy measurements $\{B_{b|y}^{\eta_y}\}_{y}$ is Bell$^{\phi^+}$-JM if and only if it is  $\text{PM}_2$-JM.
\end{theorem}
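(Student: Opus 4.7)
The plan is to reduce both the PM$_2$-JM and the Bell$^{\phi^+}$-JM conditions for $\{B_{b|y}^{\eta_y}\}$ to a single geometric decomposition of the Bloch vectors $\vec b_y := \eta_y\vec n_y$ of Bob's measurements, thereby realising the correspondence of Ref.~\cite{divianszky23_PM_Bell_Grothendieck} in our setting. Writing $B_{0|y}^{\eta_y} = \frac{1}{2}(I+\vec b_y\cdot\vec\sigma)$, the observable $B_y := B_{0|y}^{\eta_y}-B_{1|y}^{\eta_y} = \vec b_y\cdot\vec\sigma$ is traceless, so Bob's marginal is $1/2$ in both scenarios and the correlations are fully encoded in the dichotomic correlator $E(x,y)$.

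For PM$_2$-JM, a model is parameterised, for each hidden variable $\lambda$, by a $2$-outcome qubit POVM $\{E_0^\lambda, E_1^\lambda\}$ on Alice's side (so $p_A(m|\rho,\lambda)=\tr[\rho E_m^\lambda]$) and a post-processing by Bob with correlator $K_\lambda(y)\in[-1,+1]$. Writing $E_0^\lambda - E_1^\lambda = \mu_\lambda I + \vec c_\lambda\cdot\vec\sigma$ with positivity $|\mu_\lambda|+|\vec c_\lambda|\leq1$, a direct expansion yields $E_{\mathrm{PM}}(\vec r,y) = C_y + \vec r\cdot\sum_\lambda\pi(\lambda)K_\lambda(y)\vec c_\lambda$ for some $\vec r$-independent offset $C_y$. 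Requiring this to reproduce $\vec r\cdot\vec b_y$ for \emph{every} $\vec r$ in the unit ball (in particular $\vec r=0$, which forces $C_y=0$) reduces PM$_2$-JM to the geometric condition: there exist a distribution $\pi(\lambda)$, vectors $\vec c_\lambda$ with $|\vec c_\lambda|\leq 1$, and coefficients $K_\lambda(y)\in[-1,+1]$ such that $\vec b_y = \sum_\lambda\pi(\lambda)K_\lambda(y)\vec c_\lambda$ for all $y$.

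For Bell$^{\phi^+}$-JM restricted to dichotomic traceless Alice observables $A_x = \vec a_x\cdot\vec\sigma$, $|\vec a_x|\leq 1$, the correlator on $\ket{\phi^+_2}$ is $\tilde{\vec a}_x\cdot\vec b_y$, where the tilde denotes the $y$-axis flip induced by the partial transpose on Alice's side. Bell locality for every $\vec a$ in the unit ball demands a decomposition $\tilde{\vec a}\cdot\vec b_y = \sum_\lambda\pi(\lambda)A(\vec a,\lambda)B(y,\lambda)$ with factors in $[-1,+1]$. Since the left-hand side is linear in $\vec a$, one may linearise $A(\vec a,\lambda)=\vec a\cdot\vec a_\lambda$ with $|\vec a_\lambda|\leq 1$, yielding exactly the same decomposition $\vec b_y = \sum_\lambda\pi(\lambda)B(y,\lambda)\vec a_\lambda$ as on the PM side. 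This matching is the content of the correspondence of Ref.~\cite{divianszky23_PM_Bell_Grothendieck}. To lift Bell$^{\phi^+}$-JM from dichotomic to arbitrary qubit Alice POVMs, I would use that every qubit POVM is classically simulable from projective (hence dichotomic) measurements, and that classical post-processing on Alice preserves Bell locality; the two notions therefore coincide.

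The main obstacle I anticipate lies in the second step: justifying the linearisation $A(\vec a,\lambda)=\vec a\cdot\vec a_\lambda$ requires a universal local model over the continuous unit ball of Alice's Bloch vectors, blending compactness with the convex structure of Bell-local correlations. Removing the offset $C_y$ on the PM side relies crucially on PM$_2$-JM being a universal condition over all Alice states, including $\rho=I/2$. These two ingredients together are what force the PM and Bell decompositions to coincide rather than merely differ by a $y$-only additive shift.
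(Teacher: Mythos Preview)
Your route differs from the paper's: you aim to reduce both PM$_2$-JM and Bell$^{\phi^+}$-JM to a single geometric decomposition $\vec b_y=\sum_\lambda\pi(\lambda)K_\lambda(y)\vec c_\lambda$, whereas the paper never seeks such a universal characterisation. Instead, for each \emph{fixed} finite set of preparations $\{\rho_x\}$ it adjoins the complementary states $\id-\rho_x$, observes that the resulting PM correlator $P'_{xy}$ equals the Bell$^{\phi^+}$ correlator $C'_{xy}$ for the dichotomic measurements $A_{0|x}:=\rho_x^T$, and then transfers classical models and separating hyperplanes scenario by scenario, the latter via Lemma~II.1 of Ref.~\cite{divianszky23_PM_Bell_Grothendieck}. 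This bypasses linearisation altogether.

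There is, however, a genuine gap in your argument that you have not flagged. On the PM side you parametrise the model by $p_A(m|\rho,\lambda)=\tr[\rho E_m^\lambda]$, but the definition of PM$_2$-JM does not give you anything of this kind: it only guarantees, for each fixed family $\{\rho_x\}$, \emph{some} classical bit encoding $p_A(a|x,\lambda)$, with no affine dependence on $\rho_x$ and no qubit-POVM structure on Alice's side. That the target $\tr(\rho B_y)$ is linear in $\rho$ constrains only the $\pi$-average of the encodings, not the individual $p_A(\cdot|\rho,\lambda)$. Hence your extraction of $\vec b_y=\sum_\lambda\pi(\lambda)K_\lambda(y)\vec c_\lambda$ from PM$_2$-JM is unjustified, and with it the implication PM$_2$-JM $\Rightarrow$ Bell$^{\phi^+}$-JM in your scheme collapses. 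This is precisely the same linearisation obstacle you correctly identify for $A(\vec a,\lambda)$ on the Bell side; it appears symmetrically on the PM side, and the paper's finite-scenario, witness-transfer argument is exactly what avoids it in both directions.
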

	The proof of Thm.~\ref{thm:main} is presented in Appendix~\ref{sec:proof}. We note that the measurements used in the above Theorem are exactly the ones, whose POVM elements $\{B_{b|y}^{\eta_y}\}_{b,y}$ admit a Bloch vector representation similar to quantum states, that is, we can write $B_{0|y}^{\eta_y}:=\frac{1}{2}(\openone+\vec{b}_y\cdot \vec{\sigma})$, where $\vec{b}_y\in\mathbb{R}^3$ is a vector with Euclidean norm $\norm{\vec{b}_y}=\eta_y\leq 1$, and $\vec{b}_y\cdot \vec{\sigma} := b_y^{1} \sigma_x +  b_y^{2}  \sigma_y + b_y^{3} \sigma_z.$
 
We notice that the proof is constructive in the follwoing sense: let $\{B_{b|y}\}$ be a set of measurements which are not PM$_2$, i.e. there exist a set of states $\{\rho_x\}_x$ such that $\tr(\rho_x B_{b|y})\notin \text{PM}_2$. Then,the behaviour $p(ab|xy)=\tr(\phi_2^+ A_{a|x}\otimes B_{b|y})\notin \text{Bell}$, where $A_{0|x}:=\rho_x$ and $A_{1|x}:=\id - \rho_x$. Conversely, if for the set of measurements $\{A_{a|x}\}_{x}$, we have $p(ab|xy)=\tr(\phi_2^+ A_{a|x}\otimes B_{b|y})\notin \text{Bell}$ for some $\{B_{b|y}\}_y$, it holds that $\tr(\rho_x B_{b|y})\notin \text{PM}_2$,  with the states $\rho_x=A_{0|x}$. We now discuss some consequences of Thm.~\ref{thm:main}.
\begin{corollary} \label{coro:2}
	Let $\left\{X^\eta, Y^\eta, Z^\eta\right\}$ be the three noisy Pauli measurements. This set is JM iff $\eta\leq\frac{1}{\sqrt{3}}$ and PM$_2$-JM iff $\eta\leq \frac{1}{\sqrt{2}}$. Hence, this yields a simple example that $PM_2-JM$ does not imply $JM$ even for qubits. 
\end{corollary}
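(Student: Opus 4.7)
My plan is to split the corollary into its two threshold claims and handle each separately, invoking Theorem~\ref{thm:main} to reduce the harder PM$_2$-JM statement to a Bell-locality question.

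For the JM half I would exhibit the standard joint POVM $E_{ijk} = \tfrac{1}{8}\bigl(\id + \eta(i\sigma_x + j\sigma_y + k\sigma_z)\bigr)$ with $i,j,k \in \{\pm 1\}$. Positivity amounts to $\eta\sqrt{3} \leq 1$, because the Bloch-vector norm of the three mutually orthogonal terms is $\sqrt{3}$, and marginalising any two indices recovers the noisy Pauli in the third direction. The complementary statement, incompatibility for $\eta > 1/\sqrt{3}$, is the classical Busch result~\cite{Busch1986} and I would simply cite it.

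For the PM$_2$-JM half I would first use Theorem~\ref{thm:main} to replace PM$_2$-JM by Bell$^{\phi^+}$-JM. The $\eta > 1/\sqrt{2}$ direction is straightforward: restrict to $B_{b|1}^\eta$ and $B_{b|2}^\eta$ (noisy $X$ and $Z$), pair with Alice projecting along $\vec{a}_1 = (\hat{x}+\hat{z})/\sqrt{2}$, $\vec{a}_2 = (\hat{x}-\hat{z})/\sqrt{2}$, and evaluate CHSH on $|\phi^+\rangle$; the value is $2\sqrt{2}\,\eta > 2$, certifying Bell nonlocality and hence (by Theorem~\ref{thm:main}) failure of PM$_2$-JM.

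The hard part is the $\eta \leq 1/\sqrt{2}$ direction, where I must rule out \emph{every} Bell inequality and not only CHSH. For any dichotomic Alice measurements with Bloch vectors $\vec{a}_x$ of norm at most one, the full-correlation matrix on $|\phi^+\rangle$ reads $M_{xy} = \eta\,(\vec{a}_x)_y$ after absorbing the $\sigma_y^T = -\sigma_y$ sign flip into $\vec{a}_x$. For any full-correlation Bell coefficient matrix $L$ with rows $\vec{L}_x := (L_{x1},L_{x2},L_{x3})$, the quantum value is at most $\eta \sum_x \|\vec{L}_x\|_2$ while the classical bound equals $\max_{\beta \in \{\pm 1\}^3} \sum_x |\vec{L}_x \cdot \beta|$. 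I would then invoke Szarek's sharp form of Khintchine's inequality, $\mathbb{E}_\beta|\vec{L}_x \cdot \beta| \geq \|\vec{L}_x\|_2/\sqrt{2}$ for $\beta$ uniform on $\{\pm 1\}^3$, to conclude that the classical bound dominates $\sum_x \|\vec{L}_x\|_2/\sqrt{2}$ and therefore that the quantum-to-classical ratio is at most $\sqrt{2}\,\eta \leq 1$. No Bell inequality is violated, so by the separating-hyperplane theorem the correlations lie in the local polytope. Tightness of Khintchine-Szarek on CHSH-like configurations is precisely what pins the threshold at $1/\sqrt{2}$, and arguing that this estimate handles arbitrarily many Alice settings, not just the two used by CHSH, is where the main technical weight sits.
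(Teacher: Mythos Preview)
Your proof is correct, but it takes a different route from the paper for the PM$_2$-JM threshold. The paper's argument is a two-line reduction: first, with the maximally entangled state and noisy Paulis on one side, CHSH is violated iff $\eta>1/\sqrt{2}$; second, by the facet classification of Avis et al.~\cite{Avis2006}, in dichotomic full-correlation scenarios where one party has exactly three settings the CHSH inequalities are the \emph{only} non-trivial Bell inequalities, so there is nothing else to check. Your approach bypasses this polytope result entirely and instead bounds the quantum-to-classical ratio directly via the sharp Khintchine--Szarek constant $A_1=1/\sqrt{2}$, yielding $Q/L\le\sqrt{2}\,\eta$ uniformly over all full-correlation functionals and all numbers of Alice settings. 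The paper's route is shorter and structurally informative (it tells you CHSH is exactly the obstruction); yours is more self-contained, avoids invoking a facet enumeration that the reader may not have at hand, and makes the scaling with $\eta$ explicit.

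One small point worth making explicit: you restrict Alice to observables $A_x=\vec a_x\cdot\vec\sigma$, i.e.\ traceless ones, and then only test full-correlation inequalities. This is justified --- extremal dichotomic qubit POVMs are projective (hence give traceless observables), and with $\tr B_y=0$ and the maximally mixed marginals of $\ket{\phi^+}$ the behaviour has vanishing marginals, so locality of $p(ab|xy)$ reduces to membership of $(C_{xy})$ in the correlation polytope --- but the paper also leaves this step implicit, so you are not being held to a higher standard. Your JM argument (the eight-element joint POVM plus the Busch citation for the converse) is standard and matches what the paper takes for granted.
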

\begin{proof}
We start by noticing that, for the CHSH Bell inequality, if Alice and Bob share a maximally entangled state and Alice has $\eta$ white noisy Pauli measurements, it is possible to obtain a CHSH violation if and only if $\eta>1/\sqrt{2}$. To finish the proof, we simply notice that in dichotomic full correlations scenarios where one party has three measurements, the only non-trivial Bell inequality is the CHSH~\cite{Avis2006}.
\end{proof}
\begin{corollary}\label{coro:2}
	Let $\left\{\Pi_{b|y}^\eta\right\}$ be the set of all noisy projective qubit measurements. This set is $PM_2-JM$ iff $\eta\leq \frac{1}{K_3}$, where $K_3$ is the real Grothendieck constant of order three, which is known to respect~\cite{designolle2023_Grothedieck} $0.6875 \leq \frac{1}{K_3}< 0.6961$.
	
	Let $\left\{\Pi_{b|y}^{2,\eta}\right\}$ be the set of all qubit noisy projective planar measurements, i.e. measrurements whose Bloch vectors lie in a plane. This set is $PM_2-JM$ iff $\eta\leq \frac{1}{K_2}$, where $K_2=\sqrt{2}$ is the real Grothendieck constant of order two.
\end{corollary}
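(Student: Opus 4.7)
The plan is to reduce the statement to the Bell$^{\phi^+}$ scenario via Theorem~\ref{thm:main}, and then to recognize that the resulting locality question for qubit dichotomic projective measurements on the maximally entangled state is precisely what the real Grothendieck constants quantify. First, I would invoke Theorem~\ref{thm:main} (noting that it extends to arbitrary, possibly infinite, families of qubit noisy projective measurements since both PM$_2$-JM and Bell$^{\phi^+}$-JM are defined by a universal quantifier over the other party's inputs). This reduces the claim to: the family $\{\Pi_{b|y}^{\eta}\}$ (resp.\ $\{\Pi_{b|y}^{2,\eta}\}$) is Bell$^{\phi^+}$-JM if and only if $\eta\leq 1/K_3$ (resp.\ $\eta\leq 1/K_2$).

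Next, I would argue that in the $\phi^+$ Bell scenario with dichotomic projective measurements, every behaviour $p(ab|xy)=\tr(\phi^+_2 A_{a|x}\otimes B_{b|y}^{\eta})$ has uniform marginals for both parties (since $\tr_B\phi^+_2=\id/2$), so Bell locality is equivalent to satisfying all full-correlation (XOR) Bell inequalities. For any qubit projective measurement with Bloch vector $\vec a$ on Alice's side and any noisy projective one with Bloch vector $\eta\vec b_y$ on Bob's side, a direct computation using $\tr[\phi_2^+(X\otimes Y)]=\frac12\tr(XY^T)$ shows that the correlator reduces, up to an irrelevant $y$-axis flip, to
\begin{equation}
  \langle A_x B_y\rangle \;=\; \eta\,\vec a_x\cdot\vec b_y .
\end{equation}
Consequently, for any Bell matrix $M=(M_{xy})$, the supremum of $\sum_{xy} M_{xy}\langle A_x B_y\rangle$ over Alice and Bob's projective strategies equals $\eta$ times the supremum of $\sum_{xy} M_{xy}\,\vec a_x\cdot\vec b_y$ over unit vectors in $\mathbb{R}^3$ (or $\mathbb{R}^2$ in the planar case, using that the optimum for Alice lies in the span of Bob's vectors).

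I would then invoke the defining property of the real Grothendieck constant $K_d$:
\begin{equation}
 \sup_{\vec a_x,\vec b_y\in S^{d-1}} \sum_{xy} M_{xy}\,\vec a_x\cdot\vec b_y \;\leq\; K_d \sup_{s_x,t_y\in\{\pm1\}}\sum_{xy} M_{xy}\,s_x t_y ,
\end{equation}
with the inequality saturated in the supremum over $M$. Combining with the previous step, the maximal quantum value of any full-correlation expression is at most $\eta K_d$ times the classical (local) bound. Hence, if $\eta\leq 1/K_d$, no XOR inequality is violated, so—given the uniform marginals—the behaviour is Bell local for every Alice strategy, establishing Bell$^{\phi^+}$-JM and thus PM$_2$-JM. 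Conversely, if $\eta>1/K_d$, the definition of $K_d$ as a supremum guarantees some matrix $M$ whose vector-valued optimum over $\mathbb{R}^d$ unit vectors exceeds the classical bound by a factor $>1/\eta$; the corresponding Alice and Bob Bloch directions yield a Bell violation, so the family fails to be Bell$^{\phi^+}$-JM and hence (again by Theorem~\ref{thm:main}) fails to be PM$_2$-JM. Applying this with $d=3$ gives the first claim and with $d=2$ the second, using $K_2=\sqrt{2}$.

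The main obstacle I anticipate is the careful bookkeeping in the second paragraph: ensuring that Alice can be restricted to qubit projective measurements without loss of generality (which follows from extremality of such measurements for full-correlation Bell expressions on a qubit), ensuring that the planar case really collapses Alice's optimization to $\mathbb{R}^2$, and, on the technical side, that ``no XOR inequality violated $+$ uniform marginals'' genuinely implies Bell locality rather than only the absence of certain violations. The last point is a standard fact about the full-correlation facet structure of the dichotomic local polytope restricted to uniform-marginal behaviours, but it deserves an explicit reference.
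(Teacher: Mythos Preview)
Your proposal is correct and follows essentially the same route as the paper. Both arguments invoke Theorem~\ref{thm:main} to trade $\text{PM}_2$-JM for Bell$^{\phi^+}$-JM and then identify the resulting locality threshold with $1/K_d$; the only cosmetic difference is that the paper packages the Grothendieck step as a citation to the isotropic-state locality result \cite{tsirelson93,acin06} together with the noise-shift identity $\tr(\phi^+_\eta\,\Pi^A_{a|x}\otimes\Pi_{b|y})=\tr(\phi^+\,\Pi^A_{a|x}\otimes\Pi^\eta_{b|y})$, whereas you unfold that result by computing the correlators and appealing to the defining inequality of $K_d$ directly. The technical caveats you flag (restricting Alice to projective qubit measurements, and ``uniform marginals $+$ no XOR violation $\Rightarrow$ local'') are exactly the facts underlying the cited references, so your self-contained version and the paper's citation-based version are the same argument at two levels of granularity.
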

\begin{proof}
 The two-qubit isotropic state $\phi^+_\eta:=\eta\ketbra{\phi^+_2} + (1-\eta)\frac{\id_4}{4}$ violates a Bell inequality with projective measurements if and only if $\eta>\frac{1}{K_3}$~\cite{tsirelson93,acin06}. That is, if $\eta>\frac{1}{K_3}$, there exists measurements such that $\tr(\phi^+_\eta\, \Pi^A_{a|x}\otimes \Pi_{b|y})$ violates a Bell inequality. To finish the first part of the proof, we just notice that $\tr(\phi^+_\eta\, \Pi^A_{a|x}\otimes \Pi^B_{b|y})=\tr(\phi^+\, \Pi^A_{a|x}\otimes \Pi^\eta_{b|y})$. 

To prove the result for planar measurement, we notice that the two-qubit maximally entangled state violates a Bell inequality with projective planar measurements if and only if $\eta>\frac{1}{K_2}$~\cite{acin06}.
\end{proof}
We notice that, although stated differently, corollary~\ref{coro:2} is equivalent to Lemma 2 and Lemma 3 from Ref.~\cite{divianszky23_PM_Bell_Grothendieck}, and that the relationship between the Grothendieck constant with Bell and PM scenarios has been extensively discussed in Refs.~\cite{tsirelson93,acin06,vertesi08,hirsch16,Divianszky2017,designolle2023_Grothedieck,divianszky23_PM_Bell_Grothendieck}. 

We notice that corollary~\ref{coro:2} provides various examples of qubit measurements which are not JM but are $\text{PM}_2$-JM. For example, any set of $\eta$-noisy planar measurements with $\eta=1/\sqrt{2}$ is PM$_2$-JM. However, if a set of $\eta=1/\sqrt{2}$ noisy planar measurements includes two Pauli measurements and a non-Pauli one, i.e. one measurement has a Bloch vector not parallel to the other two, it is not JM.  This provides several examples of incompatible measurements which cannot be certified in a PM scenario. Also, since $0.6875 \leq \frac{1}{K_3}$,  any set of $\eta$-noisy qubit measurements with $\eta\leq 0.6875$ cannot be certified in PM scenario.

By using the connection between PM$_2$-JM and Bell$^{\phi^+}$-JM together with a known characterization of joint measurability \cite{Busch1986}, we get the following operator-inequality based connection between the three concepts.

\begin{theorem}\label{coro:3}
	Let $\{B_{b|y}^\eta\}_y$ be a pair of qubit white noisy projective measurements, that is $y\in\{1,2\}$. The set $\{B_{b|y}^\eta\}_y$ is JM iff it is Bell-JM, iff it is PM$_2$-JM, and iff $\norm{B_0^\eta+B_1^\eta}+\norm{B_0^\eta-B_1^\eta}\leq 2$, where $B_y:=B_{0|y}-B_{1|y}$.
\end{theorem}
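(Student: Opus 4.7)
The plan is to close a cycle of implications JM $\Rightarrow$ Bell-JM $\Rightarrow$ Bell$^{\phi^+}$-JM $\Leftrightarrow$ PM$_2$-JM $\Rightarrow$ (operator inequality) $\Rightarrow$ JM, combining three ingredients: Busch's 1986 joint-measurability criterion for pairs of unbiased dichotomic qubit POVMs, the standard CHSH computation on the maximally entangled state (with Fine's theorem taking care of the $(2,2,2,2)$ local polytope), and Theorem~\ref{thm:main}.

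First I would parametrise $B_{0|y}^\eta = \tfrac{1}{2}(\id + \vec{b}_y\cdot\vec{\sigma})$ with $\|\vec{b}_y\|=\eta_y\leq 1$, so that $B_y = \vec{b}_y\cdot\vec{\sigma}$. Using $\|\vec{v}\cdot\vec{\sigma}\| = \|\vec{v}\|$, the displayed operator inequality becomes $\|\vec{b}_1+\vec{b}_2\| + \|\vec{b}_1-\vec{b}_2\|\leq 2$, which is precisely Busch's criterion~\cite{Busch1986} for joint measurability of an unbiased dichotomic pair. This settles JM $\Leftrightarrow$ operator inequality. The chain JM $\Rightarrow$ Bell-JM $\Rightarrow$ Bell$^{\phi^+}$-JM is then immediate: a joint POVM plus classical post-processing yields a Bell-local model for any Alice measurement and any shared state, and specialising to $\phi^+$ only weakens the hypothesis. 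The equivalence Bell$^{\phi^+}$-JM $\Leftrightarrow$ PM$_2$-JM is exactly Theorem~\ref{thm:main}.

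To close the loop, I would show that whenever the operator inequality fails one can exhibit a CHSH violation with $\phi^+$, thereby breaking Bell$^{\phi^+}$-JM (and hence the whole chain). In the $(2,2,2,2)$ scenario Fine's theorem reduces Bell locality to non-violation of CHSH (up to output relabellings), and Alice's optimal strategy is without loss of generality dichotomic, $A_x=\vec{a}_x\cdot\vec{\sigma}$ with $\|\vec{a}_x\|\leq 1$. A direct computation yields the correlation matrix $T_{ij}:=\tr(\phi^+\sigma_i\otimes\sigma_j)=\mathrm{diag}(1,-1,1)_{ij}$, which is orthogonal. Consequently $\langle\mathrm{CHSH}\rangle_{\phi^+} = \vec{a}_1^{\,T} T(\vec{b}_1+\vec{b}_2) + \vec{a}_2^{\,T} T(\vec{b}_1-\vec{b}_2)$, and its maximum over unit $\vec{a}_x$ equals $\|\vec{b}_1+\vec{b}_2\|+\|\vec{b}_1-\vec{b}_2\|$, exceeding $2$ precisely when Busch's inequality fails. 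The only real obstacle is bookkeeping: one must check that this single CHSH facet and its sign-flip images cover the full $(2,2,2,2)$ local polytope and that dichotomic Alice observables suffice for maximising Bell violation; everything else is direct substitution and the invocation of the three results above.
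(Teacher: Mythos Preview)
Your proposal is correct and follows essentially the same logical cycle as the paper: Busch's criterion for the equivalence JM $\Leftrightarrow$ operator inequality, the trivial implications JM $\Rightarrow$ Bell-JM $\Rightarrow$ Bell$^{\phi^+}$-JM, Theorem~\ref{thm:main} for Bell$^{\phi^+}$-JM $\Leftrightarrow$ PM$_2$-JM, and a CHSH computation on $\phi^+$ to close the loop by showing that failure of the operator inequality forces a violation.

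The only noteworthy difference is stylistic. You compute the CHSH value on $\phi^+$ via Bloch vectors and the correlation matrix $T=\mathrm{diag}(1,-1,1)$, optimising $\vec{a}_1,\vec{a}_2$ directly to reach $\|\vec{b}_1+\vec{b}_2\|+\|\vec{b}_1-\vec{b}_2\|$. The paper instead works at the operator level: it first proves a general norm bound $\|\mathrm{CHSH}\|\leq\|B_0+B_1\|+\|B_0-B_1\|$ valid for \emph{any} shared state, and then shows this bound is saturated on $\phi^+$ by constructing $A_x=2\ketbra{\psi_x}^T-\id$ where $\ket{\psi_x}$ is the eigenvector of $B_0\pm B_1$ with nonnegative eigenvalue (using a universal-NOT flip if needed). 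The paper's extra norm lemma is not strictly required for the cycle, but it yields the slightly stronger statement that $\phi^+$ is actually optimal among all states for CHSH with fixed $B_0,B_1$. Conversely, your invocation of Fine's theorem is superfluous: to break Bell$^{\phi^+}$-JM you only need to \emph{exhibit} a violated Bell inequality, not to argue that CHSH exhausts the $(2,2,2,2)$ facets.
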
 
 \begin{proof} The claim on JM follows from the known criterion of Busch \cite{Busch1986} together with a suitable argument using the CHSH operator, cf. Appendix~\ref{sec:proof_coro3}, and Theorem~\ref{thm:main}.\end{proof}

\section{Discussions and open questions}

A crucial question in quantum communications is under which trust assumptions we can certify quantum properties. A common and natural approach is to limit the dimension of the Hilbert space, but various other alternative assumptions have been studied \cite{pauwels2024information}. This work presents such analysis for measurement resources in a prepare-and-measure setting, concentrating on a dimension-bounded generalisation of joint measurability. Here, we perform a case study for qubit measurements. These examples showcase the impact of trust assumptions on incompatibility witnesses (cf. Corollary~\ref{coro:2}).

The present work suggests several interesting directions for future research. For example, a major open question is to establish a hierarchy between the set of PM$_d$ and Bell-local measurements beyond the simple case of qubits and the $\phi^+$ state. More precisely, given a set of arbitrary $d$-dimensional measurements, does  $\{B_{b|y}\}_y\in$ PM$_d$ imply $\{B_{b|y}\}_y\in$ Bell-Local? Or, does $\{B_{b|y}\}_y\in$ Bell-Local imply $\{B_{b|y}\}_y\in$ PM$_d$?  The intimate connection between the Bell and prepare-and-measure scenarios \cite{Bennett1992,Catani2023} might naively suggest the existence of such a hierarchy. Notably, an invertible map exists between correlations in a Bell scenario and a subset of prepare-and-measure scenarios with preparations satisfying appropriate equivalences \cite{Wright2023}, see also Ref.~\cite{kiukas17}. However, the role of classical simulation dimensions in this relationship remains unclear.
 
In the context of qubits, the set of all possible POVMs is PM$_4$-JM, implying that qubit measurement incompatibility cannot be certified in a scenario where Alice can transmit two classical bits to Bob. The extension of this concept to qudits prompts intriguing questions. For instance, consider the set of all qutrit measurements $\{B_{b|y}\}_{b,y}$. Is there a classical dimension $d\in\mathbb{N}$ such that $\{B_{b|y}\}_{b,y}$ is PM$_d$-JM?

For qubits, does Bell$^{\phi^+_2}$-JM imply Bell-JM in general? This question was already addressed in various previous references, e.g.~\cite{quintino14,uola14,quintino15b,hirsch19_JMBELL,bene18_BELLJM}, but it seems very hard to solve. 

Another interesting direction is to generalise the definition of PM$_d$-JM, by asking for the existence of a simulation with dimension $d$ only \emph{on average} (cf. \cite{Pauwels2022} Appendix C), i.e. can the correlations be simulated by probabilistically choosing a simulation $\lambda$ of dimension $d_\lambda$ with probability $p_\lambda$ such that $d = \sum_\lambda p_\lambda d_\lambda$? This promotes $d$, as a measure of joint measurability, to a continuous quantity~\footnote{While it was proven in \cite{renner23_PM} that bit simulation strategies for the set of all qubit correlations have measure zero, for trit simulations this is still an open problem.}. A further interesting question is to investigate how PM$_d$-JM relates to signaling dimension of channels \cite{Chitambar2021}, classical simulation of temporal correlations \cite{Hoffmann2018}, and how it generalizes to GPTs \cite{Heinosaari2020,dallarno2023signaling}.

More broadly, one could study the impact of different trust assumptions on incompatibility and other measurements resources. For example, can one formulate an analogue of energy bounds for measurements and how does a distrust  assumption affect measurement simulability \cite{Oszmaniec2017,Guerini2017,Ioannou2022}.

\section{Acknowledgments}
S.E. and R.U. are grateful for the financial support from the Swiss National Science Foundation (Ambizione PZ00P2- 202179). J.P. acknowledges support from NCCR-SwissMAP.

\newpage
\appendix
\onecolumngrid
	\section{Proof of Thm.~\ref{thm:main}} \label{sec:proof}
 
The idea of the proof is to show an explicit bijective map between qubit PM strategies and Bell strategies with a pair of maximally entangled qubit pairs. Before presenting this construction, we establish some notation. We define the (full) correlator in a Bell scenario
\begin{align}
	C_{xy}:=p(a=b|x,y)-p(a\neq b|x,y) \, .
\end{align}
In a dichotomic Bell scenario with measurements $\{A_{a|x}\}_{a,x}$ and $\{B_{b|y}\}_{b,y}$, we define the observables
\begin{align}
&A_x:= A_{0|x}-A_{1|x}\, , \\
&B_y:= B_{0|y}-B_{1|y} \, .
\end{align}
With these definitions, if the parties share the state $\rho$, the correlator is given by
\begin{align}
C^{\rho}_{xy}=\tr(\rho A_x\otimes B_y) \, .
\end{align}
Furthermore, for $\rho =\ketbra{\phi^+_d}{\phi^+_d}$ we have the identity
\begin{align}
C^{\phi^+}_{xy}=\frac{1}{d}\tr(A_x^T B_y)\, .
\end{align}
Also, if the observable $A_x$ and $B_y$ respect $\tr(A_x)=\tr(B_y)=0$, if we know the value of $P_{xy}:=\tr(\rho_x B_y)$, we can always recover the probabilities via the identity $p(a,b|x,y)=\frac{1+(-1)^{a \oplus b} C^{\phi^+}_{xy}}{4}$, where $a\oplus b$ is the addition modulo two. A proof of this identity follows from direct calculation.

Similarly, in the dichotomic PM-scenario we define the (single) correlator as
\begin{align}
	P_{xy}:=p(b=0|x,y)-p(b=1|x,y) \, .
\end{align}
If Alice sends the states $
\rho_x$ and Bob performs measurements $B_{b|y}$ the correlator is given by
\begin{align}
	P_{xy}:=\tr(\rho_x B_y) \, .
\end{align}
Notice that since $p(b=0|x,y)+p(b=1|x,y)=1$, if we know the value of $P_{xy}:=\tr(\rho_x B_y)$, we can always recover the probabilities via the identity $p(b=0|x,y)=\frac{P_{xy}+1}{2}$.

We are now ready to state Lemma~\ref{lemma:1}, which employs ideas from Lemma II.1, Lemma II.2, and Lemma II.3 from Ref.~\cite{Divianszky2017}. 
\begin{lemma} \label{lemma:1}
Let $P_{xy}:=\tr(\rho_x B_y)$ be qubit correlations in a dichotomic PM scenario where Alice has $N_A$ inputs and Bob has $N_B$ inputs with\footnote{Assuming $\tr(B_y)=0$ is equivalent to saying that Bob's POVMs are given by $B_{0|y}=\eta_y \ketbra{\phi_y}+(1-\eta_y)\frac{\id}{2}$ for some state 
$\ket{\phi_y}$ and $\eta_y\in [0,1]$.} $\tr(B_y)=0$. Let us define the new correlator $P'_{xy}$ in a scenario where Alice includes the states given by $\id-\rho_x$. That is, Alice and Bob have $2N_A$ and $N_B$ inputs respectively, and 
\begin{align}
P'_{xy}:=&\tr(\rho_x B_y)=P_{xy}, \quad \quad &\forall x\in\{1,\ldots, N_A\}, \forall y\in\{1,\ldots,N_B\}\\
P'_{xy}:=&\tr((\id-\rho_{(x-N_A)}) B_y)=-P_{(x-N_A)y},  &\forall x\in\{N_A+1,\ldots, 2N_A\}, \forall y\in\{1,\ldots,N_B\}.
\end{align}
We now define the POVMs 
\begin{align} \label{eq:AlicePOVM}
    &A_{0|x}:=\rho_x^T \, , \\
    &A_{1|x}:=\id-\rho_x^T \,.
\end{align}

The probabilities $p(a,b|x,y)=\tr(\ketbra{\phi^+_d}{\phi^+_d} A_{a|x}\otimes B_{b|y})$ are Bell local if and only if the probabilities associated to $P'_{xy}:=\tr(\rho_x B_y)$ are in PM$_2$.
\end{lemma}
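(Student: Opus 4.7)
The plan is to reduce the equivalence between Bell locality of the $\phi^+_2$ probabilities and PM$_2$ of the extended preparations to an equivalence of two correlator decompositions, and then exhibit explicit maps in both directions. The key ingredient is the identity $C^{\phi^+}_{xy}=\tfrac{1}{2}\tr(A_x^T B_y)$ together with the choice $A_{0|x}=\rho_x^T$, which gives $A_x=2\rho_x^T-\id$ and hence $\tr(A_x)=0$. Combined with the standing hypothesis $\tr(B_y)=0$, this fixes the single-party marginals on $\phi^+_2$ to be $1/2$, so the full joint Bell probability is determined by the correlator via $p(a,b|x,y)=\tfrac14(1+(-1)^{a\oplus b}C^{\phi^+}_{xy})$, and a short computation gives $C^{\phi^+}_{xy}=\tr(\rho_x B_y)=P_{xy}$. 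Analogously, in the dichotomic PM scenario $p(b|x',y)=(1+P'_{x'y})/2$. Hence matching probabilities in either framework is equivalent to matching correlators, and the lemma reduces to showing that $P_{xy}$ admits a Bell local decomposition $\sum_\lambda \pi(\lambda)\alpha_x(\lambda)\beta_y(\lambda)$ with $\alpha_x(\lambda),\beta_y(\lambda)\in[-1,1]$ and uniform single-party averages iff the extended correlator $P'_{(x,c),y}=(-1)^c P_{xy}$ lies in PM$_2$.

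For the direction Bell local $\Rightarrow$ PM$_2$, starting from $p(a,b|x,y)=\sum_\lambda \pi(\lambda)\,p_A(a|x,\lambda)\,p_B(b|y,\lambda)$, I would use the following PM protocol. Identify $x'\in\{1,\ldots,2N_A\}$ with $(x,c)\in\{1,\ldots,N_A\}\times\{0,1\}$, share the hidden variable $\lambda$, let Alice sample $a\sim p_A(\cdot|x,\lambda)$ and send the classical bit $a\oplus c$ to Bob, and let Bob sample $b\sim p_B(\cdot|y,\lambda)$ and output $b\oplus a\oplus c$. A direct calculation using the conditional independence of $a$ and $b$ given $\lambda$ then shows that the PM correlator equals $(-1)^c\,P_{xy}=P'_{(x,c),y}$.

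For the converse direction PM$_2$ $\Rightarrow$ Bell local, I would absorb any randomness on Alice's side into $\lambda$, so that $p_A(a'|x',\lambda)=\delta_{a',f_\lambda(x')}$, and encode Bob's response via $m(a',y,\lambda):=2p_B(0|a',y,\lambda)-1\in[-1,1]$. Combining $P_{xy}=\tfrac12[P'_{(x,0),y}-P'_{(x,1),y}]$ with a case split on whether $f_\lambda(x,0)=f_\lambda(x,1)$ or not, one obtains the factorisation
$$P_{xy}=\sum_\lambda \pi(\lambda)\,\alpha_x(\lambda)\,\beta_y(\lambda),$$
with $\alpha_x(\lambda):=\tfrac12[(-1)^{f_\lambda(x,0)}-(-1)^{f_\lambda(x,1)}]\in\{-1,0,1\}$ and $\beta_y(\lambda):=\tfrac12[m(0,y,\lambda)-m(1,y,\lambda)]\in[-1,1]$. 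From this I would build a Bell local model with biased binary response functions $p_A(0|x,\lambda)=(1+\alpha_x(\lambda))/2$ and $p_B(0|y,\lambda)=(1+\beta_y(\lambda))/2$, then append an independent Rademacher variable $\sigma\in\{\pm1\}$ and replace $(\alpha_x,\beta_y)\to(\sigma\alpha_x,\sigma\beta_y)$. The product $\alpha_x\beta_y$ is invariant, so the correlator still equals $P_{xy}=C^{\phi^+}_{xy}$, while the single-party biases average to zero, enforcing the uniform marginals required by $\phi^+_2$.

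The main obstacle is the consistent bookkeeping between the auxiliary bit $c$ that doubles Alice's PM preparations and the dichotomic output $a$ of Alice's Bell measurement, together with the need to recover the uniform marginals of $\phi^+_2$ from an arbitrary Bell local decomposition; these are handled respectively by the XOR coding in the first direction and by the Rademacher symmetrisation in the second. Once the correlator-level equivalence is in place, the reduction in the first step promotes it to the full probability-level equivalence stated in the lemma.
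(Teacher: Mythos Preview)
Your argument is correct and takes a genuinely different route from the paper. Both proofs begin with the same reduction: using $\tr(A_x)=\tr(B_y)=0$ and the identity $C^{\phi^+}_{xy}=\tfrac{1}{2}\tr(A_x^T B_y)=\tr(\rho_x B_y)=P_{xy}$ to reduce the statement to a correlator-level equivalence. From there the two proofs diverge. The paper proves $\text{PM}_2\Rightarrow\text{Bell local}$ by embedding each classical bit strategy as a pair of diagonal (hence jointly measurable) qubit measurements on $\phi^+$, and proves the converse by the contrapositive, invoking a separating hyperplane for $P'\notin\text{PM}_2$ and then Lemma~II.1 of Ref.~\cite{Divianszky2017} to recognise the resulting functional as a genuine full-correlation Bell inequality with the same local bound. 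Your proof instead exhibits explicit simulation protocols in both directions: the XOR coding $m=a\oplus c$, $b'=b\oplus m$ turns any local model into a bit-communication model for $P'_{(x,c),y}=(-1)^c C_{xy}$, while the factorisation $P_{xy}=\sum_\lambda \pi(\lambda)\,\alpha_x(\lambda)\,\beta_y(\lambda)$ extracted from a deterministic PM$_2$ strategy, together with the Rademacher symmetrisation $(\alpha,\beta)\mapsto(\sigma\alpha,\sigma\beta)$, yields a local model with the correct uniform marginals. Your approach is entirely self-contained and constructive (in particular it avoids the external lemma of Ref.~\cite{Divianszky2017}), which is useful if one cares about explicit protocols or resource bounds; the paper's approach, on the other hand, makes transparent the conceptual links to joint measurability and to Bell-inequality witnesses that motivate the surrounding results. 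One minor point worth making explicit in your write-up: the Rademacher step is not optional cosmetics but is needed precisely because an arbitrary product decomposition of $C_{xy}$ need not reproduce the uniform single-party marginals of the $\phi^+$ behaviour; you handle this correctly, but it is the one place where a reader might otherwise suspect a gap.
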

\begin{proof}
    We start the proof by defining the Bell correlator,
\begin{align}
C'_{xy}:=&\tr(\ketbra{\phi^+_2} A_x\otimes B_y) \quad \quad &\forall x\in\{1,\ldots, N_A\},\forall y\in\{1,\ldots,N_B\}\\
C'_{xy}:=&-\tr(\ketbra{\phi^+_2} A_x\otimes B_y) \quad \quad &\forall x\in\{N_A+1,\ldots, 2N_A\},\forall y\in\{1,\ldots,N_B\} \,,
\end{align}
where $A_x=2\rho_x^T-\id$ is the observable associated to the measurements $A_{a|x}$ defined in Eq.~\eqref{eq:AlicePOVM}.

Our next step is to show that $C'_{xy}=P'_{xy}$. This will show that, if the probabilities of $P'_{xy}$ in the PM scenario can be obtained with qubits, we can construct a behaviour $p(a,b|x,y)=\tr(\ketbra{\phi^+_d}{\phi^+_d} A_{a|x}\otimes B_{b|y})$ with correlator $C'_{xy}$. Also, given the behaviour $p(a,b|x,y)=\tr(\ketbra{\phi^+_d}{\phi^+_d} A_{a|x}\otimes B_{b|y})$, we can construct a qubit PM strategy respecting $C'_{xy}=P'_{xy}$ for every $x$ and $y$. For this, it is enough to notice that
\begin{align}
C'_{xy}:=&\tr(\ketbra{\phi^+_2} A_x\otimes B_y) \quad \quad &\forall x\in\{1,\ldots, N_A\},\forall y\in\{1,\ldots,N_B\}\\
=&\frac{1}{2}\tr( A_x^T\otimes B_y)\\
=&\frac{1}{2}\Big(2\tr( \rho_x B_y)-\tr(B_y)\Big)\\
=&\tr( \rho_x B_y)\\
=&P_{xy} \\
=&P_{xy}' \,,
\end{align}
where the last inequality holds because $ x\in\{1,\ldots, N_A\},\forall y\in\{1,\ldots,N_B\}$.
Analogous calculation with $x\in\{N_A+1,\ldots, 2N_A\}$ shows that $C'_{xy}=P'_{xy}$ for all admissible values of $x$ and $y$.

We now prove that if $P'_{xy}\in\text{PM}_2$, then $C'_{xy}$ is Bell local. The intuition of the proof is that since $P'_{xy}\in\text{PM}_2$, the behaviour of $P'_{xy}$ may be obtained in a quantum scenario (potentially with shared randomness) where the measurements performed by Bob are jointly measurable. Hence, the corresponding quantum behaviour $C'_{xy}=P'_{xy}$ is necessarily Bell local. Below, we describe this proof in more details.\\
If $P'_{xy}\in\text{PM}_2$, then 
for any fixed $\lambda$, 
by definition $p(b|x,y,\lambda)=\sum_{a=0}^1 p_A(a|x,\lambda) p_B(b|y,a,\lambda)$. 
Every such strategy can be viewed as a quantum strategy where Alice sends a qubit which is diagonal in the computational basis, that is, $\rho_{x,\lambda}^c=\sum_a  p_A(a|x,\lambda)\ketbra{a}$ and Bob measures in the computational basis followed by classical post-processing, i.e., $M_{b|y,\lambda}^c:=\sum_a p_B(b|y,a,\lambda) \ketbra{a}$. 
Hence, $P'_{xy,\lambda}\in\text{PM}_2$ can be written as 
\begin{align}
    P'_{xy,\lambda} = \tr(\rho_{x,\lambda}^c M_{y,\lambda}^c)\, ,
\end{align}
where the operators $\rho_{x,\lambda}^c$ and $M_{b|y,\lambda}^c$ are diagonal in the same basis. From the construction we presented in Eq.~\eqref{eq:AlicePOVM}, we see that for each $\lambda$ the correlator $C'_{xy,\lambda}$ is obtained in a scenario where Alice's measurements are given by
\begin{align}
    &A_{0|x,\lambda}^c:= {\rho_{x,\lambda}^c}^T~, \\
    &A_{0|x,\lambda}^c:=  \openone - {\rho_{x,\lambda}^c}^T
\end{align}
Since all POVM elements of Alice's measurements are diagonal in the same basis, her set of measurements is jointly measurable and cannot lead to Bell nonlocal correlations, regardless of the shared state or the measurements performed by Bob \cite{fine82,wolf09,quintino14,uola14}, hence for each fixed $\lambda$, the correlator $C'_{xy,\lambda}$ is Bell local. Since the set of Bell local correlations is convex, convex combination of local correlators are also Bell local, that is, $\sum_\lambda \pi(\lambda) C'_{xy,\lambda}$ is necessarily Bell local.

We now prove that if the behaviour with probabilities $p(a,b|x,y)=\tr(\ketbra{\phi^+_d}{\phi^+_d} A_{a|x}\otimes B_{b|y})$ is Bell local, then $P'_{xy}\in\text{PM}_2$. For that, we will show that, if $P'_{xy}\notin\text{PM}_2$, then the correlator $C_{xy}'$ violates a correlator Bell inequality.\\

If $P'_{xy}\notin\text{PM}_2$, there exists a hyperplane that separates it from the classical set, that is, there exists real numbers $M_{xy}$ such that, 
\begin{align}
    \sum_{xy} M_{xy} P'_{xy} = Q \, ,
\end{align}
but, for all $P^{\text{PM}_2}_{xy}\in\text{PM}_2$\, ,
\begin{align}
    \max_{P^{\text{PM}_2}_{xy}\in \text{PM}_2} \Big[\sum_{xy} M_{xy} P^{\text{PM}_2}_{xy} \Big] = L<Q \, .
\end{align}

Since, $P'_{xy}=C'_{xy}$, it follows 
\begin{align}
    \sum_{xy} M_{xy} C'_{xy} = Q \,.
\end{align}
Finally, we have to show that all local Bell correlations $C^\text{LHV}_{xy}$ respect 
\begin{align}
   \max_{C^\text{LHV}_{xy}\in \text{LHV}} \Big[\sum_{xy} M_{xy} C^\text{LHV}_{xy} \Big] = L\, ,
\end{align}
that is, we have to show that $\sum_{xy} M_{xy} C_{xy}\leq L$ is a Bell inequality. This is ensured by Lemma II.1 of Ref.~\cite{Divianszky2017}, which shows that if 
  $\max_{P^{\text{PM}_2}_{xy}\in \text{PM}_2} \Big[\sum_{xy} M'_{xy} P^{\text{PM}_2}_{xy} \Big] = L$, and the coefficients' $M'_{xy}$ can be written $M'_{xy}=M_{xy}$ for $x\in\{1,\ldots N_A\}$ and $M'_{xy}=-M_{xy}$ for $x\in\{N+A+1,\ldots 2N_A\}$, it holds that $\max_{C^\text{LHV}_{xy}\in \text{LHV}} \Big[\sum_{xy} M_{xy} C^\text{LHV}_{xy} \Big] = L$. And, since our correlator $P'_{xy}$ respects $P'_{xy}=P_{xy}$ for $x\in\{1,\ldots N_A\}$ and $P'_{xy}=-P_{xy}$ for $x\in\{N+A+1,\ldots 2N_A\}$, we can always find coefficients $M'_{xy}$ meeting the hypothesis of Lemma II.1 of Ref.~\cite{Divianszky2017}.

\end{proof}

Using Lemma~\ref{lemma:1}, the proof of Thm.~\ref{thm:main} follows immediately. Let $B_{b|y}$ be a set of qubit measurements which can be written as $B_{0|y}=\eta_y\ketbra{\phi_y}+(1-\eta_y)\frac{\id}{2}$, for some qubit pure state $\ket{\phi_y}$ and $\eta_y\in[0,1]$.

If there exists qubit states $\rho_x$ such that $p(b|x,y)=\tr(\rho_x B_{b|y})\notin PM_2$, then there exist measurements $A_{a|x}$ such that $p(a,b|x,y)=\Tr(A_{a|x}\otimes B_{b|y} \phi^+_2)$ is not Bell local. 

\section{Proof of Theorem~\ref{coro:3}} \label{sec:proof_coro3}
\begin{definition}
Let $A_0,A_1,B_0,B_1\in L(\mathbb{C}_d)$ be self-adjoint operators with eigenvalues contained in $[-1,1]$.
The CHSH operator is defined as
\begin{align}
	\text{CHSH}:=A_0\otimes B_0+A_1\otimes B_0+A_0\otimes B_1-A_1\otimes B_1 \, .
\end{align}
\end{definition}
\begin{lemma}
	The CHSH operator satisfies the norm inequality $\norm{CHSH}\leq \norm{B_0+B_1}+\norm{B_0-B_1}$.
\end{lemma}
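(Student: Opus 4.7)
The plan is to exploit the bilinear structure of the CHSH operator by grouping the $A_0$ and $A_1$ terms, so as to factor Bob's side into the two natural combinations $B_0+B_1$ and $B_0-B_1$. Concretely, I would first rewrite
\begin{align}
\text{CHSH} = A_0\otimes(B_0+B_1) + A_1\otimes(B_0-B_1).
\end{align}
This identity is just a rearrangement of the four summands in the definition.

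Next I would apply the triangle inequality for the operator norm to the two-term sum, obtaining
\begin{align}
\norm{\text{CHSH}} \leq \norm{A_0\otimes(B_0+B_1)} + \norm{A_1\otimes(B_0-B_1)}.
\end{align}
Then I would invoke the multiplicativity of the operator norm under tensor products, $\norm{X\otimes Y} = \norm{X}\cdot\norm{Y}$, to split each summand into a product of norms on Alice's and Bob's sides.

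Finally, I would use the hypothesis that $A_0$ and $A_1$ are self-adjoint with eigenvalues in $[-1,1]$, which implies $\norm{A_0}\leq 1$ and $\norm{A_1}\leq 1$, so that both prefactors drop out and the claimed bound $\norm{\text{CHSH}}\leq \norm{B_0+B_1}+\norm{B_0-B_1}$ follows. There is no real obstacle here; the only step that uses any structure beyond linear algebra is the multiplicativity of the operator norm under tensor products, which is standard. The rest is just bookkeeping of the triangle inequality and the eigenvalue bound on Alice's observables.
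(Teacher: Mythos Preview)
Your proof is correct and follows exactly the same route as the paper: rewrite $\text{CHSH}=A_0\otimes(B_0+B_1)+A_1\otimes(B_0-B_1)$, apply the triangle inequality, use multiplicativity of the operator norm under tensor products, and bound $\norm{A_x}\leq 1$ from the spectral hypothesis. There is nothing to add.
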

\begin{proof}
\begin{align}
	\norm{CHSH}=&\norm{A_0\otimes( B_0+B_1) + A_1\otimes( B_0-B_1)} \\
		\leq &\norm{A_0\otimes( B_0+B_1)} + \norm{A_1\otimes( B_0-B_1)} \\
		=& \norm{A_0}\norm{B_0+B_1} + \norm{A_1}\norm{ B_0-B_1}\\
		\leq& \norm{B_0+B_1} + \norm{ B_0-B_1} \, .
\end{align}
\end{proof}
This shows that if Bob performs measurements which satisfy $\norm{B_0+B_1}+\norm{B_0-B_1}\leq2$, Alice and Bob cannot violate the CHSH inequality, regardless of the shared state, i.e. the measurements are Bell-JM. The following Theorem shows that the above upper bound can be reached with $\ket{\phi^+}$.
\begin{theorem}
	Let $B_0,B_1\in L(\mathbb{C}_2)$ be qubit self-adjoint operators with eigenvalues in $[-1,1]$ and $\tr(B_y)=0$. There exist self-adjoint operators $A_0,A_1\in L(\mathbb{C}_2)$ with eigenvalues contained in $\{-1,1\}$  and a quantum state $\ket{\psi}\in\mathbb{C}_2\otimes\mathbb{C}_2$ such that 
\begin{align}
\bra{\psi} CHSH \ket{\psi}= \norm{B_0+B_1}+\norm{B_0-B_1} \, .
\end{align}
Moreover, the state $\ket{\psi}\in\mathbb{C}_2\otimes\mathbb{C}_2$ may be taken as the maximally entangled state $\ket{\phi^+}:=\frac{1}{\sqrt{2}}\sum_{i=1}^2 \ket{ii}$.
\end{theorem}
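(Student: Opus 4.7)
The plan is to reduce the statement to a geometric optimization on the Bloch sphere. The key tool is the transpose identity $\bra{\phi^+}M\otimes N\ket{\phi^+}=\tfrac{1}{2}\tr(M^T N)$ already invoked in Appendix~\ref{sec:proof}. Applying it term-by-term to the CHSH operator yields
\begin{align}
\bra{\phi^+}\text{CHSH}\ket{\phi^+}=\tfrac{1}{2}\tr\!\bigl(A_0^T(B_0+B_1)\bigr)+\tfrac{1}{2}\tr\!\bigl(A_1^T(B_0-B_1)\bigr),
\end{align}
which I will saturate by aligning $A_0$ with $B_0+B_1$ and $A_1$ with $B_0-B_1$.

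The second step is to parametrize everything through Bloch vectors. Write $B_y=\vec{b}_y\cdot\vec\sigma$ with $\norm{\vec b_y}\leq 1$ (permitted because $\tr(B_y)=0$ and the eigenvalues lie in $[-1,1]$). The observables $A_x$ with eigenvalues in $\{-1,1\}$ are exactly the traceless self-adjoint operators with unit Bloch vector, $A_x=\vec a_x\cdot\vec\sigma$ with $\norm{\vec a_x}=1$. Using $\sigma_x^T=\sigma_x$, $\sigma_y^T=-\sigma_y$, $\sigma_z^T=\sigma_z$ together with $\tr(\sigma_i\sigma_j)=2\delta_{ij}$ gives
\begin{align}
\tfrac{1}{2}\tr\!\bigl(A_x^T(B_0\pm B_1)\bigr)=R(\vec a_x)\cdot(\vec b_0\pm\vec b_1),
\end{align}
where $R:(v^1,v^2,v^3)\mapsto(v^1,-v^2,v^3)$ is the reflection induced by transposition. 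Since $R$ is an isometric involution, $R(\vec a_x)$ ranges over all unit vectors as $\vec a_x$ does, and the operator norm of a traceless qubit self-adjoint operator equals the Euclidean norm of its Bloch vector.

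The third step is the optimization itself. By Cauchy--Schwarz, each of the two inner products is individually maximized by choosing $R(\vec a_0)=(\vec b_0+\vec b_1)/\norm{\vec b_0+\vec b_1}$ and $R(\vec a_1)=(\vec b_0-\vec b_1)/\norm{\vec b_0-\vec b_1}$, with any unit vector used if a denominator vanishes. Inverting $R$ produces the desired unit vectors $\vec a_0,\vec a_1$, hence admissible $\pm 1$-valued observables $A_0,A_1$. This choice gives
\begin{align}
\bra{\phi^+}\text{CHSH}\ket{\phi^+}=\norm{\vec b_0+\vec b_1}+\norm{\vec b_0-\vec b_1}=\norm{B_0+B_1}+\norm{B_0-B_1},
\end{align}
which is precisely the claimed equality, also matching the upper bound of the preceding lemma.

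The main, and ultimately minor, obstacle is the bookkeeping around the transpose: one must verify that $R$ preserves the unit sphere and that the reconstructed $A_x$ remain bona fide $\pm 1$-eigenvalued observables. Once the transpose identity is used to linearize $\tfrac{1}{2}\tr(A^T B)$ into an $\mathbb{R}^3$ inner product, the argument is purely geometric and no tool beyond Cauchy--Schwarz is needed; in particular, the argument makes clear why the maximally entangled state, rather than an arbitrary bipartite qubit state, suffices to realize the bound.
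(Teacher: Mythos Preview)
Your proof is correct and follows essentially the same route as the paper: both apply the transpose identity for $\ket{\phi^+}$ to decouple the CHSH expectation into two independent traces $\tfrac{1}{2}\tr(A_x^T(B_0\pm B_1))$, then choose each $A_x$ to saturate the corresponding operator norm. The only cosmetic difference is that you carry out the optimization via Bloch vectors and Cauchy--Schwarz, whereas the paper picks the maximizing eigenvector of $B_0\pm B_1$ and uses the ``universal NOT'' trick to fix the sign; for qubits these are the same construction in different coordinates.
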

\begin{proof}
First, notice that, for any operators $A_0,A_1,B_0,B_1\in L(\mathbb{C}_2)$, we have that
\begin{align}
\bra{\phi^+} CHSH \ket{\phi^+}=\frac{\tr\Big(A_0^T\otimes( B_0+B_1)\Big) + \tr\Big(A_1^T\otimes( B_0-B_1)\Big)}{2} \, .
\end{align}	
By definition of operator norm, there exists normalised vectors $\ket{\psi_0'},\ket{\psi_1'}\in L(\mathbb{C}_2)$ such that
\begin{align}
\abs{\bra{\psi_0'}B_0+B_1\ket{\psi_0'}}=&\norm{B_0+B_1} \, , \\
\abs{\bra{\psi_1'}B_0-B_1\ket{\psi_{1}'}}=&\norm{B_0-B_1} \, .
\end{align}
Since $\tr(B_y)=0$, there exists 
$\ket{\psi_0},\ket{\psi_1}\in L(\mathbb{C}_2)$ such that
\begin{align}
\bra{\psi_0}B_0+B_1\ket{\psi_0}=&\norm{B_0+B_1} \,  ,\\
\bra{\psi_1}B_0-B_1\ket{\psi_{1}}=&\norm{B_0-B_1} \, .
\end{align}
Indeed, if $\bra{\psi_0'}B_0+B_1\ket{\psi_0'}\geq0$, take $\ket{\psi_0}:=\ket{\psi_0'}$. If $\bra{\psi_0'}B_0+B_1\ket{\psi_0'}<0$, take $\ket{\psi_0}$ as a vector which satisfies $\ketbra{\psi_0}=\id - \ketbra{\psi_0'}$, that is, $\ket{\psi_0}$ is the universal NOT of the state $\ket{\psi_0'}$. It follows that
\begin{align}
\bra{\psi_0}B_0+B_1\ket{\psi_0}&=-\tr(\ketbra{\psi_0'}(B_0+B_1)) \nonumber\\
&=-\bra{\psi_0'}B_0+B_1\ket{\psi_0'}
\end{align}
is a positive number. An analogous argument works for $\ket{\psi_1}$.

We now define $A_x=:2\ketbra{\psi_x}^T-\id$. We have
\begin{align}
\bra{\phi^+} CHSH \ket{\phi^+}&=\frac{\tr\Big(A_0^T\otimes( B_0+B_1)\Big) + \tr\Big(A_1^T\otimes( B_0-B_1)\Big)}{d}\\
&=\frac{2}{d}\left(\bra{\psi_0}B_0+B_1\ket{\psi_0}+\bra{\psi_1}B_0-B_1\ket{\psi_1}\right)-\Big(\tr(B_0)+\tr(B_1)+\tr(B_0)-\tr(B_1)\Big)\\
&=\frac{2}{d}\left(\bra{\psi_0}B_0+B_1\ket{\psi_0}+\bra{\psi_1}B_0-B_1\ket{\psi_1}\right)-2\tr(B_0) \, .
\end{align}
Since $\tr(B_0)=0$, we have
\begin{align}
\bra{\phi^+} CHSH \ket{\phi^+}
&=\frac{2}{2}\left(\bra{\psi_0}B_0+B_1\ket{\psi_0}+\bra{\psi_1}B_0-B_1\ket{\psi_1}\right) \, ,\\
&= \norm{B_0+B_1}+\norm{B_0-B_1} \, .
\end{align}
\end{proof}
\bibliographystyle{apsrev4-2.bst}

\bibliography{0_MTQ_bib.bib}
\end{document}